




\documentclass[letterpaper]{article}
\usepackage{amsmath,amsfonts,amsthm,amssymb}

\theoremstyle{plain}

\numberwithin{equation}{section}
\newtheorem{thm}{Theorem}[section]
\newtheorem{lem}[thm]{Lemma}
\newtheorem{cor}[thm]{Corollary}

\newcounter{cond}

\newcommand{\complex}{{\mathbb C}}
\newcommand{\positive}{{\mathbb N}}
\newcommand{\real}{{\mathbb R}}
\newcommand{\ascript}{{\mathcal A}}
\newcommand{\bscript}{{\mathcal B}}
\newcommand{\rmre}{\mathrm{Re\,}}
\newcommand{\rmtr}{\mathrm{tr}}
\newcommand{\rmspan}{\mathrm{span}}
\newcommand{\fhat}{\widehat{f}}
\newcommand{\ghat}{\widehat{g}}
\newcommand{\hhat}{\widehat{h}}
\newcommand{\cbar}{\bar{c}}
\newcommand{\fbar}{\bar{f}}
\newcommand{\ubar}{\bar{u}}
\newcommand{\chihat}{\widehat{\chi}}

\newcommand{\ab}[1]{\left|#1\right|}
\newcommand{\doubleab}[1]{\left\|#1\right\|}
\newcommand{\brac}[1]{\left\{#1\right\}}
\newcommand{\paren}[1]{\left(#1\right)}
\newcommand{\sqbrac}[1]{\left[#1\right]}
\newcommand{\elbows}[1]{{\left\langle#1\right\rangle}}
\newcommand{\ket}[1]{{\left|#1\right>}}
\newcommand{\bra}[1]{{\left<#1\right|}}

\errorcontextlines=0

\begin{document}

\title{QUANTUM MEASURES\\and INTEGRALS
}
\author{S. Gudder\\ Department of Mathematics\\
University of Denver\\ Denver, Colorado 80208, U.S.A.\\
sgudder@.du.edu\\
}
\date{}
\maketitle

\begin{abstract}
We show that quantum measures and integrals appear naturally in any $L_2$-Hilbert space $H$. We begin by defining a decoherence operator $D(A,B)$ and it's associated $q$-measure operator $\mu (A)=D(A,A)$ on $H$. We show that these operators have certain positivity, additivity and continuity properties. If $\rho$ is a state on $H$, then $D_\rho (A,B)=\rmtr\sqbrac{\rho D(A,B)}$ and $\mu _\rho (A)=D_\rho (A,A)$ have the usual properties of a decoherence functional and $q$-measure, respectively. The quantization of a random variable $f$ is defined to be a certain self-adjoint operator $\fhat$ on $H$. Continuity and additivity properties of the map $f\mapsto\fhat$ are discussed. It is shown that if $f$ is nonnegative, then $\fhat$ is a positive operator. A quantum integral is defined by
$\int fd\mu _\rho =\rmtr (\rho\fhat\,)$. A tail-sum formula is proved for the quantum integral. The paper closes with an example that illustrates some of the theory.
\end{abstract}

\medskip
\noindent
{\bf Keywords:} quantum measures, quantum integrals, decoherence functionals.

\section{Introduction}  
Quantum measure theory was introduced by R.~Sorkin in his studies of the histories approach to quantum gravity and cosmology \cite{sor941, sor942}. Since 1994 a considerable amount of literature has been devoted to this subject
\cite{dgt08, gt09, gud101, mocs05, sal02, sor071, sw10} and more recently a quantum integral has been introduced \cite{gud091, gud092}. At first sight this theory appears to be quite specialized and its applicability has been restricted to the investigation of quantum histories and the related coevent interpretation of quantum mechanics
\cite{gtw09, gud092, gud102, sor072}. However, this article intends to demonstrate that quantum measure theory may have wider application and that its mathematical structure is already present in the standard quantum formalism. One of our aims is to show that quantum measures are abundant in any $L_2$-Hilbert space $H$. In particular, for any state (density operator) $\rho$ on $H$ there is a naturally associated quantum measure $\mu _\rho$. For an event
$A$ we interpret $\mu _\rho (A)$ as the quantum propensity that $A$  occurs. Moreover corresponding to $\rho$ there is a natural quantum integral $\int fd\mu _\rho$ that can be interpreted as the quantum expectation of the random variable $f$.

The article begins by defining a decoherence operator $D(A,B)$ for events $A,B$ and the associated $q$-measure operator $\mu (A)=D(A,A)$ on $H$. It is shown that these operator-valued functions have certain positivity, additivity and continuity properties. Of particular importance is the fact that although $\mu _\rho (A)$ is not additive, it does satisfy a more general grade-2 additivity condition. If $\rho$ is a state on $H$, then
$D_\rho (A,B)=\rmtr\sqbrac{\rho D(A,B)}$ and $\mu _\rho (A)=D_\rho (A,A)$ have the usual properties of a decoherence functional and $q$-measure, respectively. The quantization of a random variable $f$ is defined to be a certain self-adjoint operator $\fhat$ on $H$. Continuity and additivity properties of the map $f\mapsto\fhat$ are discussed. It is shown that if $f$ is nonnegative, then $\fhat$ is a positive operator. A quantum integral is defined by
$\int fd\mu _\rho =\rmtr(\rho\fhat )$. A tail-sum formula is proved for the quantum integral. It follows that
$\int fd\mu _\rho$ coincides with the quantum integral considered in previous works. The paper closes with an example that illustrates some of the theory. The example shows that the usual decoherence functionals and
$q$-measures considered before reduce to the form given in Section~2.

\section{Quantum Measures} 
A \textit{probability space} is a triple $(\Omega ,\ascript ,\nu )$ where $\Omega$ is a \textit{sample space} whose elements are \textit{sample points} or \textit{outcomes}, $\ascript$ is a $\sigma$-algebra of subsets of $\Omega$ called \textit{events} and $\nu$ is a measure on $\ascript$ satisfying $\nu (\Omega )=1$. For $A\in\ascript$,
$\nu (A)$ is interpreted as the probability that event $A$ occurs. Let $H$ be the Hilbert space
\begin{equation*}
H=L_2(\Omega ,\ascript ,\nu )=\brac{f\colon\Omega\to\complex ,\int\ab{f}^2d\nu <\infty}
\end{equation*}
with inner product $\elbows{f,g}=\int\fbar gd\nu$. We call real-valued functions $f\in H$ \textit{random variables}. If
$f$ is a random variable, then by Schwarz's inequality
\begin{equation}         
\label{eq21}
\ab{\int fd\nu}\le\int\ab{f}d\nu\le\doubleab{f}
\end{equation}
so the \textit{expectation} $E(f)=\int fd\nu$ exists and is finite. Of course \eqref{eq21} holds for any $f\in H$.

The characteristic function $\chi _A$ of $A\in\ascript$ is a random variable with $\doubleab{\chi _A}=\nu (A)^{1/2}$ and we write $\chi _\Omega =1$. For $A,B\in\ascript$ we define the \textit{decoherence operator} $D(A,B)$ as the operator on $H$ defined by $D(A,B)=\ket{\chi _A}\bra{\chi _B}$. Thus, for $f\in H$ we have
\begin{equation*}
D(A,B)f=\elbows{\chi _B,f}\chi _A=\int _Bfd\nu\chi _A
\end{equation*}
Of course, if $\nu (A)\nu (B)=0$, then $D(A,B)=0$.

\begin{lem}       
\label{lem21}
If $\nu (A)\nu (B)\ne 0$, then $D(A,B)$ is a rank~1 operator with $\doubleab{D(A,B)}=\nu (A)^{1/2}\nu (B)^{1/2}$.
\end{lem}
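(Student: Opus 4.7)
The plan is to treat $D(A,B) = \ket{\chi_A}\bra{\chi_B}$ as the standard rank-one dyad associated to two nonzero vectors in $H$, and just compute. The hypothesis $\nu(A)\nu(B)\ne 0$ is exactly what guarantees that neither $\chi_A$ nor $\chi_B$ is the zero element of $H$, since $\doubleab{\chi_A} = \nu(A)^{1/2}$ and $\doubleab{\chi_B} = \nu(B)^{1/2}$.

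First I would verify the rank claim. For any $f \in H$, $D(A,B)f = \elbows{\chi_B,f}\chi_A$, so the range of $D(A,B)$ is contained in $\rmspan\brac{\chi_A}$, which is one-dimensional because $\chi_A \ne 0$. To see the range is not the zero subspace, evaluate on $f = \chi_B$: then $D(A,B)\chi_B = \elbows{\chi_B,\chi_B}\chi_A = \nu(B)\chi_A \ne 0$. Hence $D(A,B)$ has rank exactly one.

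Next I would compute the operator norm. For arbitrary $f \in H$, the Schwarz inequality gives
\begin{equation*}
\doubleab{D(A,B)f} = \ab{\elbows{\chi_B,f}}\doubleab{\chi_A} \le \doubleab{\chi_B}\doubleab{f}\doubleab{\chi_A} = \nu(A)^{1/2}\nu(B)^{1/2}\doubleab{f},
\end{equation*}
so $\doubleab{D(A,B)} \le \nu(A)^{1/2}\nu(B)^{1/2}$. Equality is achieved at the unit vector $f = \chi_B/\doubleab{\chi_B}$, for which $D(A,B)f = \doubleab{\chi_B}\chi_A$ has norm $\nu(B)^{1/2}\nu(A)^{1/2}$. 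This yields the reverse inequality and completes the identification of the norm.

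There is no substantive obstacle here; the only thing to keep straight is that the hypothesis $\nu(A)\nu(B) \ne 0$ is used twice, once to assert the range is nontrivial (giving rank exactly one rather than zero) and once implicitly to ensure that the maximizer $\chi_B/\doubleab{\chi_B}$ is a well-defined unit vector in $H$.
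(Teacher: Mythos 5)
Your proof is correct and follows essentially the same route as the paper: the Schwarz inequality gives the upper bound on the norm, and evaluating at the unit vector $\chi_B/\doubleab{\chi_B}$ achieves it. The only difference is that you spell out why the rank is exactly one (nonvanishing on $\chi_B$), a point the paper dismisses as clear.
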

\begin{proof}
It is clear that $D(A,B)$ is a rank~1 operator with range $\rmspan (\chi _A)$. For $f\in H$ we have
\begin{align*}
\doubleab{D(A,B)f}&=\doubleab{\elbows{\chi _B,f}\chi _A}=\ab{\elbows{\chi _B,f}}\,\doubleab{\chi _A}\\
  &\le\doubleab{\chi _A}\,\doubleab{\chi _B}\,\doubleab{f}=\nu (A)^{1/2}\nu (B)^{1/2}\doubleab{f}
\end{align*}
Hence, $\doubleab{D(A,B)}\le\nu (A)^{1/2}\nu (B)^{1/2}$. Letting $g$ be the unit vector $\chi _B/\doubleab{\chi _B}$ we have
\begin{equation*}
\doubleab{D(A,B)g}=\ab{\elbows{\chi _B,g}}\,\doubleab{\chi _A}=\doubleab{\chi _B}\,\doubleab{\chi _A}
  =\nu (A)^{1/2}\nu (B)^{1/2}
\end{equation*}
The result now follows.
\end{proof}

For $A\in\ascript$ we define the $q$-\textit{measure operator} $\mu (A)$ on $H$ by
\begin{equation*}
\mu (A)=D(A,A)=\ket{\chi _A}\bra{\chi _A}
\end{equation*}
Hence,
\begin{equation*}
\mu (A)f=\elbows{\chi _A,f}\chi _A=\int _Afd\nu\chi _A
\end{equation*}
If $\nu (A)\ne 0$, then by Lemma~\ref{lem21}, $\mu (A)$ is a positive (and hence self-adjoint) rank~1 operator with
$\doubleab{\mu (A)}=\nu (A)$.

We now show that $A\mapsto\chi _A$ is a \textit{vector-valued measure} on $\ascript$. Indeed, if
$A\cap B=\emptyset$, then $\chi _{A\cup B}=\chi _A+\chi _B$ so $A\mapsto\chi _A$ is \textit{additive}. Moreover, if
$A_1\subseteq A_2\subseteq\cdots$ is an increasing sequence of events, then letting $A=\cup A_i$ we have
\begin{equation*}
\doubleab{\chi _A-\chi _{A_n}}^2=\doubleab{\chi _{A\smallsetminus A_n}}^2=\nu (A\smallsetminus A_n)
  =\nu (A)-\nu (A_n)\to 0
\end{equation*}
Hence, $\lim\chi _{A_n}=\chi _{\cup A_i}$. The \textit{countable additivity condition}
\begin{equation*}
\chi _{\cup B_i}=\sum\chi _{B_i}
\end{equation*}
follows for mutually disjoint $B_i\in\ascript$ where the convergence of the sum is in the vector norm topology. Since
$\chi _A$ is orthogonal to $\chi _B$ whenever $A\cap B=\emptyset$, we call $A\mapsto\chi _A$ an
\textit{orthogonally scattered} vector-valued measure. A similar computation shows that if
$A_1\supseteq A_2\supseteq\cdots$ is a decreasing sequence on $\ascript$, then 
\begin{equation*}
\lim\chi _{A_n}=\chi _{\cap A_i}
\end{equation*}
This also follows from the fact that the complements $A'_i$ form an increasing sequence so by additivity
\begin{align*}
\lim\chi _{A_n}&=1-\lim\chi _{A'_n}=1-\chi _{\cup A'_i}=1-\sqbrac{\chi _{(\cap A_i)'}}\\
  &=\chi _{\cap A_i}
\end{align*}

The map $D$ from $\ascript\times\ascript$ into the set of bounded operators $\bscript (H)$ on $H$ has some obvious properties:
\begin{list} {(\arabic{cond})}{\usecounter{cond}
\setlength{\rightmargin}{\leftmargin}}
\item If $A\cap B=\emptyset$, then $D(A\cup B,C)=D(A,C)+D(B,C)$ for all $C\in\ascript$ (additivity)
\item $D(A,B)^*=D(B,A)$ (conjugate symmetry)
\item $D(A,B)^2=\nu (A\cap B)D(A,B)$
\item $D(A,B)D(A,B)^*=\nu (B)\mu (A)$, $D(A,B)^*D(A,B)=\nu (A)\mu (B)$
\end{list}
Less obvious properties are given in the following theorem.

\begin{thm}       
\label{thm22}
{\rm (a)}\enspace $D\colon\ascript\times\ascript\to\bscript (H)$ is positive semidefinite in the sense that if
$A_i\in\ascript$, $c_i\in\complex$, $i=1,\ldots ,n$, then
\begin{equation*}
\sum _{i,j=1}^nD(A_i,A_j)c_i\cbar _j
\end{equation*}
is a positive operator.
{\rm (b)}\enspace If $A_1\subseteq A_2\subseteq\cdots$ is an increasing sequence in $\ascript$, then the continuity condition
\begin{equation*}
\lim D(A_i,B)=D(\cup A_i,B)
\end{equation*}
holds for every $B\in\ascript$ where the limit is in the operator norm topology.
\end{thm}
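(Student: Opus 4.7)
The plan is to exploit the rank-one bra--ket structure $D(A,B)=\ket{\chi_A}\bra{\chi_B}$ established just before the theorem, reducing both parts to elementary computations.

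For part (a), I would collect the double sum into a single outer product. Since $\bra{\chi_{A_j}}$ is antilinear in its argument, the factor $\bar{c}_j$ can be absorbed as $\bra{c_j\chi_{A_j}}$, while $c_i\ket{\chi_{A_i}}=\ket{c_i\chi_{A_i}}$. Setting $\psi=\sum_i c_i\chi_{A_i}\in H$, the sum collapses to
\begin{equation*}
\sum_{i,j=1}^n D(A_i,A_j)\,c_i\cbar_j = \ket{\psi}\bra{\psi},
\end{equation*}
which is manifestly positive since $\elbows{f,\ket{\psi}\bra{\psi}f}=\ab{\elbows{\psi,f}}^2\ge 0$ for all $f\in H$. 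This is the entire content of (a); I do not expect any obstacle.

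For part (b), I would observe that the map $A\mapsto D(A,B)$ is linear in the first slot (by additivity property (1)) and more directly that
\begin{equation*}
D(A,B)-D(A_n,B) = \ket{\chi_A-\chi_{A_n}}\bra{\chi_B} = \ket{\chi_{A\smallsetminus A_n}}\bra{\chi_B},
\end{equation*}
using $A_n\subseteq A$. This is a rank-one operator, so exactly the calculation in Lemma~\ref{lem21} gives
\begin{equation*}
\doubleab{D(A,B)-D(A_n,B)} = \doubleab{\chi_{A\smallsetminus A_n}}\,\doubleab{\chi_B} = \nu(A\smallsetminus A_n)^{1/2}\nu(B)^{1/2}.
\end{equation*}
By continuity of the scalar measure $\nu$ from below, $\nu(A\smallsetminus A_n)=\nu(A)-\nu(A_n)\to 0$, hence the operator norm tends to zero. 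Equivalently, one may cite the norm convergence $\chi_{A_n}\to\chi_A$ established in the paragraph on the orthogonally scattered measure, together with the general bound $\doubleab{\ket{u}\bra{v}}=\doubleab{u}\doubleab{v}$.

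The only potentially tricky point is the conjugation bookkeeping in (a): it is important that $\bra{\cdot}$ is antilinear, so that $\bar{c}_j$ and not $c_j$ enters the ket--bra rewriting. Once this is done correctly the identification with $\ket{\psi}\bra{\psi}$ is forced and the positivity is immediate. Part (b) requires no new ideas beyond Lemma~\ref{lem21} and the already-established continuity of the vector-valued measure $A\mapsto\chi_A$.
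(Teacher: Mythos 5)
Your proposal is correct and follows essentially the same route as the paper: part (a) is the identical collapse of the double sum into $\ket{\sum c_i\chi_{A_i}}\bra{\sum c_j\chi_{A_j}}$, and part (b) reduces, as in the paper, to the convergence $\doubleab{\chi_A-\chi_{A_n}}=\sqbrac{\nu(A)-\nu(A_n)}^{1/2}\to 0$, with your use of the exact rank-one norm from Lemma~\ref{lem21} being only a cosmetic variant of the paper's direct estimate of $\doubleab{\sqbrac{D(A,B)-D(A_i,B)}f}$.
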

\begin{proof}
(a)\enspace For $A_i\in\ascript$, $c_i\in\complex$, $i=1,\ldots ,n$, we have
\begin{equation}         
\label{eq22}
\sum _{i,j=1}^nD(A_i,A_j)c_i\cbar _j=\sum _{i,j=1}^n\ket{\chi _{A_i}}\bra{\chi _{A_j}}c_i\cbar _j
  =\ket{\sum _{i=1}^nc_i\chi _{A_i}}\bra{\sum _{j=1}^nc_j\chi _{A_j}}
\end{equation}
Since the right side of \eqref{eq22} is a positive operator, the result follows.\newline
(b)\enspace For the increasing sequence $A_i$, let $A=\cup A_i$ and let $f\in H$. Then
\begin{align*}
\doubleab{\sqbrac{D(A,B)-D(A_i,B)}f}&=\doubleab{\int _Bfd\nu (\chi _A-\chi _{A_i}}
  =\ab{\int _Bfd\nu}\,\doubleab{\chi _A-\chi _{A_i}}\\
  &=\ab{\int _Bfd\nu}\sqbrac{\nu (A)-\nu (A_i)}^{1/2}\\
  &\le\int _B\ab{f}d\nu\sqbrac{\nu (A)-\nu (A_i)}^{1/2}\\
  &\le\sqbrac{\nu (A)-\nu (A_i)}^{1/2}\doubleab{f}
\end{align*}
Hence
\begin{equation*}
\lim\doubleab{D(A,B)-D(A_i,B)}\le\lim\sqbrac{\nu (A)-\nu (A_i)}^{1/2}=0
\qedhere
\end{equation*}
\end{proof}

If $A_i$ are mutually disjoint events, the countable additivity condition
\begin{equation*}
D\paren{\bigcup _{i=1}^\infty A_i,B}=\sum _{i=1}^\infty D(A_i,B)
\end{equation*}
follows from Theorem~\ref{thm22}(b). We conclude that $A\mapsto D(A,B)$ is an operator-valued measure. By conjugate symmetry, $B\mapsto D(A,B)$ is also an operator-valued measure. As before, it follows that if $A_1\supseteq A_2\supseteq\cdots$ is a decreasing sequence in $\ascript$, then
\begin{equation*}
\lim D(A_i,B)=D(\cap A_i,B)
\end{equation*}
for all $B\in\ascript$.

The map $\mu\colon\ascript\to\bscript (H)$ need not be additive. For example, if $A,B\in\ascript$ are disjoint, then
\begin{align*}
\mu (A\cup B)&=\ket{\chi _{A\cup B}}\bra{\chi _{A\cup B}}=\ket{\chi _A+\chi _B}\bra{\chi _A+\chi _B}\\
  &=\ket{\chi _A}\bra{\chi _A}+\ket{\chi _B}\bra{\chi _B}+\ket{\chi _A}\bra{\chi _B}+\ket{\chi _B}\bra{\chi _A}\\
   &=\mu (A)+\mu (B)+2\rmre D(A,B)
\end{align*}
Notice that additivity is spoiled by the presence of the self-adjoint operator $2\rmre D(A,B)$. For this reason, we view this operator as measuring the interference between the events $A$ and $B$. Because of this nonadditivity, we have that $\mu (A')\ne\mu (\Omega )-\mu (A)$ in general and $A\subseteq B$ need not imply $\mu (A)\ne\mu (B)$ in the usual order of self-adjoint operators. However, $A\mapsto\mu (A)$ does satisfy the condition given in (a) of the next theorem.

\begin{thm}       
\label{thm23}
{\rm (a)}\enspace $\mu$ satisfies grade-2 additivity:
\begin{equation*}
\mu (A\cup B\cup C)=\mu (A\cup B)+\mu (A\cup C)+\mu (B\cup C)-\mu (A)-\mu (B)-\mu (C)
\end{equation*}
whenever $A,B,C\in\ascript$ are mutually disjoint.
{\rm (b)}\enspace $\mu$ satisfies the continuity conditions
\begin{align*}
\lim\mu (A_i)&=\mu (\cup A_i)\\
\lim\mu (B_i)&=\mu (\cap A_i)
\end{align*}
in the operator norm topology for any increasing sequence $A_i$ in $\ascript$ or decreasing sequence
$B_i\in\ascript$.
\end{thm}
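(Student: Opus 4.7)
For part (a), the plan is straightforward algebraic expansion in the rank-1 bra-ket calculus. Since $A,B,C$ are mutually disjoint, $\chi_{A\cup B\cup C}=\chi_A+\chi_B+\chi_C$, and likewise $\chi_{X\cup Y}=\chi_X+\chi_Y$ for each of the three pairs. Expanding $\mu(A\cup B\cup C)=\ket{\chi_A+\chi_B+\chi_C}\bra{\chi_A+\chi_B+\chi_C}$ produces nine rank-1 pieces: the three diagonal terms $\mu(A),\mu(B),\mu(C)$ and the six cross terms $\ket{\chi_X}\bra{\chi_Y}$ with $X\ne Y$. Doing the same expansion on each pair gives $\mu(X\cup Y)=\mu(X)+\mu(Y)+\ket{\chi_X}\bra{\chi_Y}+\ket{\chi_Y}\bra{\chi_X}$, so summing the three pair contributions yields $2[\mu(A)+\mu(B)+\mu(C)]$ plus all six cross terms. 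Subtracting $\mu(A)+\mu(B)+\mu(C)$ then recovers exactly the nine-term expansion of $\mu(A\cup B\cup C)$.

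For part (b), the cleanest route is to decouple the two ingredients. The vector-valued measure discussion preceding the theorem already supplies $\lim\chi_{A_i}=\chi_{\cup A_i}$ for increasing $A_i$ and $\lim\chi_{B_i}=\chi_{\cap B_i}$ for decreasing $B_i$, both in the norm of $H$. So it suffices to prove the generic statement that $u\mapsto\ket{u}\bra{u}$ is norm-continuous from $H$ to $\bscript(H)$. Using the identity
\begin{equation*}
\ket{u_n}\bra{u_n}-\ket{u}\bra{u}=\ket{u_n-u}\bra{u_n}+\ket{u}\bra{u_n-u},
\end{equation*}
the triangle inequality, and the rank-1 norm formula $\doubleab{\ket{x}\bra{y}}=\doubleab{x}\doubleab{y}$ established in Lemma~\ref{lem21}, the operator-norm difference is bounded by $\doubleab{u_n-u}\paren{\doubleab{u_n}+\doubleab{u}}$, which vanishes in the limit since $\doubleab{u_n}$ stays bounded. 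Applying this with $u_n=\chi_{A_n}$, $u=\chi_{\cup A_i}$ handles the increasing case, and with $u_n=\chi_{B_n}$, $u=\chi_{\cap B_i}$ the decreasing case.

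No serious obstacle is anticipated: part (a) is pure bookkeeping, and part (b) factors cleanly as known norm convergence of $A\mapsto\chi_A$ followed by a generic rank-1 Lipschitz estimate. The only small pitfall I want to avoid is trying to deduce (b) directly from Theorem~\ref{thm22}(b) with $B=A_i$, since the second slot would then vary with $i$; splitting $D(A_i,A_i)-D(A,A)$ through the intermediate $D(A_i,A)$ would also work but is a touch less symmetric than the rank-1 identity above, which treats the increasing and decreasing cases in a single stroke.
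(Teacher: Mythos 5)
Your proof of part (a) is correct and is essentially the paper's argument: both expand every term in the rank-1 bra-ket calculus using $\chi_{X\cup Y}=\chi_X+\chi_Y$ for disjoint sets and match the nine resulting pieces. For part (b) you take a genuinely different route. The paper works pointwise on a vector $f\in H$, splitting $\sqbrac{\mu(A_i)-\mu(A)}f$ through the intermediate term $\int_{A_i}f\,d\nu\,\chi_A$ and estimating each piece with Schwarz, arriving at the explicit bound $\doubleab{\mu(A_i)-\mu(A)}\le 2\nu(A)^{1/2}\sqbrac{\nu(A)-\nu(A_i)}^{1/2}$; the decreasing case is then dismissed with ``a similar proof holds.'' You instead factor the claim into two independent ingredients: the norm convergence $\chi_{A_i}\to\chi_{\cup A_i}$ (and $\chi_{B_i}\to\chi_{\cap B_i}$) already established in the vector-valued-measure discussion, plus the generic identity
\begin{equation*}
\ket{u_n}\bra{u_n}-\ket{u}\bra{u}=\ket{u_n-u}\bra{u_n}+\ket{u}\bra{u_n-u},
\end{equation*}
which gives the Lipschitz-type estimate $\doubleab{u_n-u}\paren{\doubleab{u_n}+\doubleab{u}}$ and so norm continuity of $u\mapsto\ket{u}\bra{u}$. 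This is cleaner and handles the increasing and decreasing cases in one stroke; it also yields numerically the same rate as the paper once you substitute $\doubleab{\chi_A-\chi_{A_i}}=\sqbrac{\nu(A)-\nu(A_i)}^{1/2}$ and $\doubleab{\chi_{A_i}}\le\doubleab{\chi_A}=\nu(A)^{1/2}$. One cosmetic remark: Lemma~\ref{lem21} states the norm formula only for operators of the form $\ket{\chi_A}\bra{\chi_B}$, so strictly you are invoking the general rank-1 fact $\doubleab{\ket{x}\bra{y}}=\doubleab{x}\,\doubleab{y}$, which is proved by the identical Schwarz argument; this is not a gap, just worth a sentence. Your closing caution about not applying Theorem~\ref{thm22}(b) with both slots varying is well taken and is exactly why the paper's proof of (b) redoes the estimate rather than citing that theorem.
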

\begin{proof}
(a)\enspace For $A,B,C\in\ascript$ mutually disjoint, we have
\begin{align*}
\mu (A&\cup B)+\mu (A\cup C)+\mu (B\cup C)-\mu (A)-\mu (B)-\mu (C)\\
  &=\ket{\chi _A+\chi _B}\bra{\chi _A+\chi _B}+\ket{\chi _A+\chi _C}\bra{\chi _A+\chi _C}
  +\ket{\chi _B+\chi _C}\bra{\chi _B+\chi _C}\\
  &\quad -\ket{\chi _A}\bra{\chi _A}-\ket{\chi _B}\bra{\chi _B}-\ket{\chi _C}\bra{\chi _C}\\
  &=\ket{\chi _A}\bra{\chi _A}+\ket{\chi _B}\bra{\chi _B}+\ket{\chi _C}\bra{\chi _C}+\ket{\chi _A}\bra{\chi _B}
  +\ket{\chi _B}\bra{\chi _A}\\
  &\quad +\ket{\chi _A}\bra{\chi _C}+\ket{\chi _C}\bra{\chi _A}+\ket{\chi _B}\bra{\chi _C}+\ket{\chi _C}\bra{\chi _C}\\
  &=\ket{\chi _A+\chi _B+\chi _C}\bra{\chi _A+\chi _B+\chi _C}=\mu (A\cup B\cup C)
\end{align*}
(b)\enspace For an increasing sequence $A_i\in\ascript$, let $A=\cup A_i$. For $f\in L_2(\Omega ,\ascript ,\nu )$ we have
\begin{align*}
\doubleab{\sqbrac{\mu (A_i)-\mu (A)}f}&=\doubleab{\int _{A_i}fd\nu\chi _{A_i}-\int _Afd\nu\chi _A}\\
  &\le\doubleab{\int _{A_i}fd\nu\chi _{A_i}-\int _{A_i}fd\nu\chi _A}
  +\doubleab{\int _{A_i}\!fd\nu\chi _A-\int _A\!fd\nu\chi _A}\\
  &=\doubleab{\int _{A_i}fd\nu (\chi _{A_i}-\chi _A)}+\doubleab{\int f(\chi _{A_i}-\chi _A)d\nu\chi _A}\\
  &=\ab{\int _{A_i}fd\nu}\,\doubleab{\chi _A-\chi _{A_i}}+\ab{\int f(\chi _A-\chi _{A_i})d\nu}\,
  \doubleab{\chi _A}\\
  &\le\int _{A_i}\ab{f}d\nu\sqbrac{\nu (A)-\nu (A_i)}^{1/2}+\doubleab{f}\,\doubleab{\chi _A-\chi _{A_i}}\nu (A)^{1/2}\\
  &\le 2\nu (A)^{1/2}\sqbrac{\nu (A)-\nu (A_i)}^{1/2}\doubleab{f}
\end{align*}
Hence,
\begin{equation*}
\doubleab{\mu (A_i)-\mu (A)}\le 2\nu (A)^{1/2}\sqbrac{\nu (A)-\nu (A_i)}^{1/2}\to 0
\end{equation*}
A similar proof holds for a decreasing sequence $B_i\in\ascript$
\end{proof}

Additional properties of $\mu$ are given in the next lemma.

\begin{lem}       
\label{lem24}
{\rm (a)}\enspace If $A\cap B=\emptyset$, then $\mu (A)\mu (B)=0$.
{\rm (b)}\enspace $\mu (A)=0$ if and only if $\nu (A)=0$.
{\rm (c)}\enspace If $A\cap B=\emptyset$ and $\mu (A)=0$, then $\mu (A\cup B)=\mu (B)$.
{\rm (d)}\enspace If $\mu (A\cup B)=0$, then $\mu (A)=\mu (B)=0$.
\end{lem}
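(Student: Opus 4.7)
The plan is to exploit the fact that $\mu(A)=\ket{\chi_A}\bra{\chi_A}$ is a rank-one operator, so everything reduces to computations with the vector $\chi_A$ and the scalar $\nu(A)$, using Lemma~\ref{lem21} in the guise $\doubleab{\mu(A)}=\nu(A)$ whenever $\nu(A)\ne 0$. I would take the four parts in the order (a), (b), (c), (d), since (b) is the hinge that makes (c) and (d) essentially immediate.

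For (a), I would simply compose the two rank-one operators: $\mu(A)\mu(B)=\ket{\chi_A}\bra{\chi_A}\cdot\ket{\chi_B}\bra{\chi_B}=\elbows{\chi_A,\chi_B}\ket{\chi_A}\bra{\chi_B}$, and note that $\elbows{\chi_A,\chi_B}=\int\chi_A\chi_B\,d\nu=\nu(A\cap B)=0$ under the disjointness assumption. For (b), the forward direction uses Lemma~\ref{lem21} applied to the pair $(A,A)$: if $\mu(A)=0$, then its operator norm is $0$, and since the lemma gives $\doubleab{\mu(A)}=\nu(A)$ when $\nu(A)\ne 0$, the contrapositive forces $\nu(A)=0$. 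Conversely, $\nu(A)=0$ means $\chi_A=0$ as an element of $H$, hence $\mu(A)=\ket{\chi_A}\bra{\chi_A}=0$.

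Part (c) then follows from (b): $\mu(A)=0$ yields $\nu(A)=0$, so $\chi_A=0$ in $H$, and thus $\chi_{A\cup B}=\chi_A+\chi_B=\chi_B$ in $H$ (using the additivity of the characteristic-function-valued measure noted earlier in the section and the disjointness of $A,B$). Therefore $\mu(A\cup B)=\ket{\chi_{A\cup B}}\bra{\chi_{A\cup B}}=\ket{\chi_B}\bra{\chi_B}=\mu(B)$. For (d), $\mu(A\cup B)=0$ gives $\nu(A\cup B)=0$ by (b); since $A,B\subseteq A\cup B$ and $\nu$ is a measure, we have $\nu(A)=\nu(B)=0$, so (b) applied in the other direction yields $\mu(A)=\mu(B)=0$; note that disjointness of $A$ and $B$ is not needed here.

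No part of this looks hard. The only minor point worth being careful about is the identification of functions that agree $\nu$-a.e.\ with elements of $H$ in step (c): when I write $\chi_A=0$ I mean as a vector in $L_2$, and the equality $\chi_{A\cup B}=\chi_B$ is likewise in $H$, which is all that is needed since $\mu$ depends only on the $H$-class of $\chi_A$.
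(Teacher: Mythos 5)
Your proof is correct and follows essentially the same route as the paper's: part (b) is the hinge and (c), (d) are immediate consequences of it. The only cosmetic differences are that you obtain the forward direction of (b) from the norm formula $\doubleab{\mu (A)}=\nu (A)$ of Lemma~\ref{lem21}, where the paper instead evaluates $\mu (A)$ on the constant function $1$, and in (c) you argue at the level of the vector identity $\chi _{A\cup B}=\chi _B$ in $H$ rather than expanding the operator; both variants are equally valid (and your remark that disjointness is not needed in (d) is accurate).
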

\begin{proof}
That (a) holds is clear.
(b)\enspace If $\mu (A)=0$, then for any $f\in H$ we have
\begin{equation*}
\int _Afd\nu\chi _A=\mu (A)f=0
\end{equation*}
Letting $f=1$ gives $\nu (A)\chi _A=0\hbox{ a.e. } [\nu ]$. Hence, $\nu (A)=0$. The converse clearly holds.
(c)\enspace If $A\cap B=\emptyset$ and $\mu (A)=0$, then by (b) we have that $\nu (A)=0$. Hence, for $f\in H$ we have
\begin{equation*}
\ket{\chi _A}\bra{\chi _B}f=\int _Bfd\nu\chi _A=0
\end{equation*}
We conclude that
\begin{equation*}
\mu (A\cup B)=2\rmre\ket{\chi _A}\bra{\chi _B}+\ket{\chi _B}\bra{\chi _B}=\mu (B)
\end{equation*}
(d)\enspace If $\mu (A\cap B)=0$, then by (b) we have that $\nu (A\cap B)=0$. Since $\nu$ is additive,
$\nu (A)=\nu (B)=0$ so by (b), $\mu (A)=\mu (B)=0$.
\end{proof}
 If $\rho$ is a density operator (state) on $H$, we define the \textit{decoherence functional}
 $D _\rho\colon\ascript\times\ascript\to\complex$ by
 \begin{equation*}
D_\rho (A,B)=\rmtr\sqbrac{\rho D(A,B)}=\elbows{\rho\chi _A,\chi _B}
\end{equation*}
Decoherence functionals have been extensively studied in the literature \cite{dgt08, gt09, mocs05, sor942, sor071} where $D_\rho (A,B)$ is used to describe the interference between $A$ and $B$ for the state $\rho$. Concrete examples of $D_\rho (A,B)$ are given in Section~3. Notice that
\begin{equation}         
\label{eq23}
\ab{D_\rho (A,B)}\le\doubleab{\rho}\,\doubleab{\chi _A}\,\doubleab{\chi _B}
  \le\nu (A)^{1/2}\nu (B)^{1/2}
\end{equation}
The next result, which follows from Theorem~\ref{thm22}, shows that $D_\rho (A,B)$ has the usual properties of a decoherence functional.

\begin{cor}       
\label{cor25}
{\rm (a)}\enspace $A\mapsto D_\rho (A,B)$ is a complex measure on $\ascript$ for any $B\in\ascript$.
{\rm (b)}\enspace If $A_1,\ldots ,A_n\in\ascript$, then the $n\times n$ matrix $D_\rho (A_i,A_j)$ is positive semidefinite. 
\end{cor}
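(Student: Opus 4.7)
The plan is to derive both parts directly from Theorem~\ref{thm22} by exploiting the fact that pairing with a density operator via the trace is a linear, positivity-preserving, and norm-continuous functional on $\bscript(H)$.

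For part (a), I would start from the observation (stated just after Theorem~\ref{thm22}) that $A\mapsto D(A,B)$ is countably additive in the operator norm topology: for mutually disjoint $A_i\in\ascript$,
\begin{equation*}
D\paren{\bigcup_{i=1}^\infty A_i,B}=\sum_{i=1}^\infty D(A_i,B)
\end{equation*}
with operator-norm convergence of the partial sums. The functional $T\mapsto\rmtr(\rho T)$ is continuous in the operator norm, since $\ab{\rmtr(\rho T)}\le\doubleab{T}$ whenever $\rho$ is a trace-class operator with $\rmtr(\rho)=1$. Applying this functional to both sides and using its linearity yields
\begin{equation*}
D_\rho\paren{\bigcup_{i=1}^\infty A_i,B}=\sum_{i=1}^\infty D_\rho(A_i,B).
\end{equation*}
Combined with $D_\rho(\emptyset,B)=0$ and the bound \eqref{eq23} which guarantees finiteness, this makes $A\mapsto D_\rho(A,B)$ a complex measure.

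For part (b), I would apply the same trace functional to the positivity identity in Theorem~\ref{thm22}(a). By linearity of the trace,
\begin{equation*}
\sum_{i,j=1}^n D_\rho(A_i,A_j)c_i\cbar_j=\rmtr\sqbrac{\rho\sum_{i,j=1}^n D(A_i,A_j)c_i\cbar_j}.
\end{equation*}
By Theorem~\ref{thm22}(a), the operator $S=\sum_{i,j}D(A_i,A_j)c_i\cbar_j$ is positive. Since $\rho$ is also positive, one has $\rmtr(\rho S)=\rmtr(\rho^{1/2}S\rho^{1/2})\ge 0$, because $\rho^{1/2}S\rho^{1/2}$ is a positive operator. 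Thus the Hermitian form associated with the matrix $[D_\rho(A_i,A_j)]$ takes nonnegative values on every $c\in\complex^n$, establishing positive semidefiniteness.

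There is no serious obstacle here: the corollary is essentially a transport of structure from $\bscript(H)$-valued $D$ to complex-valued $D_\rho$ through a continuous, positive, linear functional. The only point requiring care is the interchange of trace and sum in part (a), which is legitimized by operator-norm continuity of $T\mapsto\rmtr(\rho T)$ rather than by any ad hoc dominated convergence argument.
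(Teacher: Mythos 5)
Your proof is correct and follows exactly the route the paper intends: the paper states Corollary~\ref{cor25} without a written proof, remarking only that it ``follows from Theorem~\ref{thm22},'' and your argument---transporting countable additivity through the operator-norm-continuous functional $T\mapsto\rmtr(\rho T)$ for part (a), and positivity through $\rmtr(\rho S)=\rmtr(\rho^{1/2}S\rho^{1/2})\ge 0$ for part (b)---is the natural and complete filling-in of that remark. No gaps.
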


For a density operator $\rho$ on $H$, we define the $q$-\textit{measure} $\mu _\rho\colon\ascript\to\real ^+$ by
\begin{equation*}
\mu _\rho (A)=\rmtr\sqbrac{\rho\mu (A)}=\elbows{\rho\chi _A,\chi _A}
\end{equation*}
We interpret $\mu _\rho (A)$ as the quantum propensity that the event $A$ occurs \cite{gud101, sor941, sor942}. It follows from \eqref{eq23} that $\mu _\rho (A)\le\nu (A)$. Theorem~\ref{thm23} holds with $\mu$ replaced by
$\mu _\rho$. This shows that $\mu _\rho$ has the usual properties of a $q$-measure.

\section{Quantum Integrals} 
Let $f\in H$ be a nonnegative random variable. The \textit{quantization} of $f$ is the operator $\fhat$ on $H$ defined by
\begin{equation}         
\label{eq31}
(\fhat g)(y)=\int\min\sqbrac{f(x),f(y)}g(x)d\nu (x)
\end{equation}
We can write \eqref{eq31} as
\begin{equation*}
(\fhat g)(y)=\int _{\brac{x\colon f(x)\le f(y)}}fgd\nu +f(y)\int _{\brac{x\colon f(x)>f(y)}}gd\nu
\end{equation*}
Since
\begin{align*}
\ab{(\fhat g)(y)}&\le\int\min\sqbrac{f(x),f(y)}\ab{g(x)}d\nu (x)\le\int f\ab{g}d\nu\\
  &\le\doubleab{f}\,\doubleab{g}
\end{align*}
we have $\doubleab{\fhat g}\le\doubleab{f}\,\doubleab{g}$. We conclude that $\fhat$ is bounded with
$\doubleab{\fhat}\le\doubleab{f}$. Since $\fhat$ is bounded and symmetric, it follows that $\fhat$ is a self-adjoint operator. If $f$ is an  arbitrary random variable we can write $f=f^+-f^-$ where $f^+(x)=\max\sqbrac{f(x),0}$ and
$f^-(x)=-\min\sqbrac{f(x),0}$. Then we have that $f^+,f^-\ge 0$ and we define the \textit{quantization}
$\fhat =f^{+\wedge}-f^{-\wedge}$. Again, $\fhat$ is a bounded self-adjoint operator on $H$. According to the usual formalism, we can interpret $\fhat$ as an observable for a quantum system.

\begin{thm}       
\label{thm31}
{\rm (a)}\enspace For any $A\in\ascript$, $\chihat _A=\ket{\chi _A}\bra{\chi _A}=\mu (A)$.
{\rm (b)}\enspace For any $\alpha\in\real$, $(\alpha f)^\wedge =\alpha\fhat$.
{\rm (c)}\enspace If $0\le f_1\le f_2\le\cdots$ is an increasing sequence of random variables converging pointwise to a random variable $f$, then $\fhat _i$ converges to $\fhat$ in the operator norm topology.
{\rm (d)}\enspace If $f,g,h$ are random variables with mutually disjoint support, then
\begin{equation}         
\label{eq32}
(f+g+h)^\wedge =(f+g)^\wedge+(f+h)^\wedge +(g+h)^\wedge -\fhat -\ghat -\hhat
\end{equation}
\end{thm}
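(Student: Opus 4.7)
Parts (a) and (b) should follow directly from the definition~\eqref{eq31}. For (a), since $\chi _A$ is $\{0,1\}$-valued, $\min[\chi _A(x),\chi _A(y)]=\chi _A(x)\chi _A(y)$, so~\eqref{eq31} reduces to $(\chihat _A g)(y)=\chi _A(y)\int _A g\,d\nu=\chi _A(y)\elbows{\chi _A,g}=\ket{\chi _A}\bra{\chi _A}g$, which is $\mu (A)g$. For (b), when $\alpha\ge 0$ one pulls $\alpha$ out of the minimum in~\eqref{eq31}; it then suffices to treat $\alpha=-1$, and since $(-f)^+=f^-$ and $(-f)^-=f^+$, the definition of the quantization for signed functions gives $(-f)^\wedge=f^{-\wedge}-f^{+\wedge}=-\fhat$. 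Combining the two cases yields $(\alpha f)^\wedge=\alpha\fhat$ for all real $\alpha$.

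For (c), I would first establish the elementary pointwise bound
\begin{equation*}
0\le\min[f(x),f(y)]-\min[f_i(x),f_i(y)]\le(f-f_i)(x)+(f-f_i)(y)
\end{equation*}
by a short case-split on which of the two arguments realizes each minimum. Substituting into~\eqref{eq31} and applying Minkowski together with Cauchy--Schwarz (using $\nu(\Omega)=1$, so that the $L_1$-norm is dominated by the $L_2$-norm) gives $\doubleab{\fhat g-\fhat _i g}\le 2\doubleab{f-f_i}\,\doubleab{g}$, whence $\doubleab{\fhat-\fhat _i}\le 2\doubleab{f-f_i}$. Since $0\le f-f_i\le f\in L_2$ and $f-f_i\to 0$ pointwise, dominated convergence forces $\doubleab{f-f_i}\to 0$, proving operator-norm convergence of $\fhat _i$ to $\fhat$.

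Part (d) is the main obstacle, and my plan is to transfer the grade-2 additivity of $\mu$ from Theorem~\ref{thm23}(a) through a layer-cake representation of $\fhat$. First I reduce to nonnegative $f,g,h$: since the three functions have disjoint supports, $(f+g+h)^\pm=f^\pm+g^\pm+h^\pm$ with all six summands having mutually disjoint supports, and~\eqref{eq32} is linear in the positive and negative parts. For nonnegative random variables, substituting the identity $\min[a,b]=\int_0^\infty\chi _{(t,\infty)}(a)\,\chi _{(t,\infty)}(b)\,dt$ into~\eqref{eq31} and invoking Fubini yields
\begin{equation*}
(\fhat\phi)(y)=\int_0^\infty\paren{\mu(\brac{f>t})\phi}(y)\,dt,\qquad\phi\in H.
\end{equation*}
With $F=f+g+h$ and $A_t=\brac{f>t}$, $B_t=\brac{g>t}$, $C_t=\brac{h>t}$, disjointness of supports gives $\brac{F>t}=A_t\cup B_t\cup C_t$ as a disjoint union for each $t>0$, and likewise $\brac{f+g>t}=A_t\cup B_t$, and so on for the pairwise sums. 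Applying Theorem~\ref{thm23}(a) to the triple $A_t,B_t,C_t$ pointwise in $t$ and integrating then reproduces exactly~\eqref{eq32}.

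The only subtlety is justifying the interchange of the operator identity with the $t$-integral; this is handled by dominated convergence after pairing with a fixed $\phi\in H$ and using the bound $\ab{(\fhat\phi)(y)}\le\doubleab{f}\,\doubleab{\phi}$ derived in the paragraph before the theorem (together with the analogous bounds for the pairwise quantizations). Alternatively, one may bypass the layer cake entirely and verify the pointwise identity for the integrand of~\eqref{eq31} by a nine-case analysis on which support each of $x,y$ belongs to, but the layer-cake argument has the virtue of making the connection to grade-2 additivity of $\mu$ completely transparent.
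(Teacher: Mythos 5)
Your parts (a)--(c) follow the paper's proof essentially verbatim: the same identity $\min[\chi_A(x),\chi_A(y)]=\chi_A(x)\chi_A(y)$ for (a), the same reduction via $f=f^+-f^-$ for (b), and the same pointwise bound $\ab{\min[f(x),f(y)]-\min[f_i(x),f_i(y)]}\le\ab{f(x)-f_i(x)}+\ab{f(y)-f_i(y)}$ leading to $\doubleab{\fhat-\fhat_i}\le 2\doubleab{f-f_i}$ for (c) (the paper closes with monotone rather than dominated convergence, an immaterial difference since $(f-f_i)^2\downarrow 0$ is dominated by $f^2$). Part (d), however, is where you genuinely diverge, and your route is sound. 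The paper first proves \eqref{eq32} for nonnegative \emph{simple} functions by expanding $\min[f(x),f(y)]=\sum_{i,j}\min(\alpha_i,\alpha_j)\chi_{A_i}(x)\chi_{A_j}(y)$ and collecting the resulting double sums, then passes to general nonnegative $f,g,h$ by monotone approximation and part (c), and finally reduces the signed case to positive and negative parts exactly as you do. You instead use the layer-cake identity $\min[a,b]=\int_0^\infty\chi_{(t,\infty)}(a)\chi_{(t,\infty)}(b)\,dt$ to write $\fhat=\int_0^\infty\mu(\brac{f>t})\,dt$ and then integrate the grade-2 additivity of $\mu$ from Theorem~\ref{thm23}(a) over the level parameter $t$, using that disjointness of supports makes $\brac{f+g+h>t}$ the disjoint union $\brac{f>t}\cup\brac{g>t}\cup\brac{h>t}$ for $t\ge 0$; your Fubini justification via $\int\min[f(x),f(y)]\ab{\phi(x)}\,d\nu(x)\le\doubleab{f}\,\doubleab{\phi}$ is adequate. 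What your approach buys: it bypasses the simple-function approximation and the dependence of (d) on (c), it makes the source of grade-2 additivity of the quantization map completely transparent (it is inherited from $\mu$ level set by level set), and the representation $\fhat=\int_0^\infty\mu(\brac{f>t})\,dt$ already contains the tail-sum formula of Theorem~\ref{thm34} after tracing against $\rho$. What the paper's approach buys is that the explicit computation \eqref{eq33} for simple functions is reused directly in the positivity proof of Theorem~\ref{thm32}, so the two arguments share infrastructure.
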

\begin{proof}
(a)\enspace For $A\in\ascript$ we have that $\min\sqbrac{\chi _A(x),\chi _A(y)}=\chi _A(x)\chi _A(y)$. Hence, for
$g\in H$ we obtain
\begin{equation*}
(\chihat _Ag)(y)=\int\chi _A(x)\chi _A(y)g(x)d\nu (x)=\int _Agd\nu\chi _A(y)=\sqbrac{\mu (A)g}(y)
\end{equation*}
(b) If $\alpha\ge 0$ and $f\ge 0$, then clearly $(\alpha f)^\wedge =\alpha\fhat$. If $\alpha\ge 0$ and $f$ is a random variable, then
\begin{equation*}
\alpha f=(\alpha f)^+-(\alpha f)^-=\alpha f^+-\alpha f^-
\end{equation*}
Hence,
\begin{equation*}
(\alpha f)^\wedge =(\alpha f^+)^\wedge -(\alpha f^-)^\wedge
  =\alpha f^{+\wedge}-\alpha f^{-\wedge}=\alpha f^\wedge
\end{equation*}
If $\alpha <0$ and $f$ is a random variable, then
\begin{equation*}
\alpha f=(\alpha f)^+-(\alpha f)^-=\ab{\alpha}f^--\ab{\alpha}f^+
\end{equation*}
Hence,
\begin{equation*}
(\alpha f)^\wedge =\paren{\ab{\alpha}f^-}^\wedge -\paren{\ab{\alpha}f^+}^\wedge
  =\ab{\alpha}f^{-\wedge}-\ab{\alpha}f^{+\wedge}=-\ab{\alpha}\fhat =\alpha\fhat
\end{equation*}
(c) For any $g\in H$ we have
\begin{align*}
\ab{(\fhat -\fhat _i)g(y)}&=\ab{\int\brac{\min\sqbrac{f(x),f(y)}-\min\sqbrac{f_i(x),f_i(y)}}g(x)d\nu (x)}\\
  &\le\int\ab{\min\sqbrac{f(x),f(y)}-\min\sqbrac{f_i(x),f_i(y)}}\ab{g(x)}d\nu (x)\\
  &\le\int\sqbrac{\ab{f(y)-f_i(y)}+\ab{f(x)-f_i(x)}}\ab{g(x)}d\nu (x)\\
  &\le\sqbrac{f(y)-f_i(y)}\doubleab{g}+\doubleab{f-f_i}\,\doubleab{g}
\end{align*}
Squaring, we obtain
\begin{align*}
\ab{(\fhat -\fhat _i)g(y)}^2&\le\paren{\sqbrac{f(y)-f_i(y)}+\doubleab{f-f_i}}^2\doubleab{g}^2\\
  &=\paren{\sqbrac{f(y)\!-\!f_i(y)}^2+\doubleab{f\!-\!f_i}^2+2\sqbrac{f(y)\!-\!f_i(y)}\doubleab{f\!-\!f_i}}\doubleab{g}^2
\end{align*}
Hence,
\begin{equation*}
\doubleab{\fhat -\fhat _i)g}^2\le 4\doubleab{f-f_i}^2\doubleab{g}^2
\end{equation*}
We conclude that
\begin{equation*}
\doubleab{\fhat -\fhat _i}\le 2\doubleab{f-f_i}
\end{equation*}
By the monotone convergence theorem, $\lim\doubleab{f-f_i}=0$ and the result follows.\newline
(d)\enspace If $f\ge 0$ is a simple function $f=\sum _{i=1}^n\alpha _i\chi _{A_i}$, $\alpha _i>0$, then
\begin{equation*}
\min\sqbrac{f(x),f(y)}=\sum _{i,j=1}^n\min (\alpha _i,\alpha _j)\chi _{A_i}(x)\chi _{A_j}(y)
\end{equation*}
Hence, for any $u\in H$ we have
\begin{align}         
\label{eq33}
(\fhat u)(y)&=\int\sum _{i,j=1}^n\min (\alpha _i,\alpha _j)\chi _{A_i}(x)\chi _{A_j}(y)u(x)d\nu (x)\notag\\
  &=\sum _{i,j=1}^n\min (\alpha _i,\alpha _j)\int _{A_i}ud\nu\chi _{A_j}(y)
\end{align}
Suppose $g\ge 0$ and $h\ge 0$ are simple functions with $g=\sum\beta _i\chi _{B_i}$ and
$h=\sum\gamma _i\chi _{C_i}$, $\alpha _i,\beta _i,\gamma _i>0$. Assuming that $f,g,h$ have disjoint support, it follows that $\cup A_i$, $\cup B_i$, $\cup C_i$ are mutually disjoint. Employing the notation
\begin{equation*}
I(a,b)=\sum _{i,j}\min (a_i,b_j)\int _{A_i}ud\nu\chi _{A_j}
\end{equation*}
As in \eqref{eq33} we obtain
\begin{align*}
(f+g)^\wedge&u+(f+h)^\wedge u+(g+h)^\wedge u-\fhat u-\ghat u-\hhat u\\
  &=I(\alpha ,\alpha )+I(\beta ,\beta )+I(\gamma ,\gamma )+I(\alpha ,\beta )
  +I(\beta ,\alpha )+I(\alpha ,\gamma )\\
  &\quad +I(\gamma ,\alpha )+I(\beta ,\gamma )+I(\gamma ,\beta)\\
  &=(f+g+h)^\wedge u
\end{align*}
We conclude that \eqref{eq32} holds for simple nonnegative random variables with disjoint support. Now suppose $f,g,h$ are arbitrary nonnegative random variables with disjoint support. Then there exist increasing sequences $f_i,g_i,h_i$ of nonnegative simple random variables converging pointwise to $f$, $g$ and $h$, respectively. Since \eqref{eq32} holds for $f_i,g_i,h_i$, applying (c) shows that \eqref{eq32} holds for $f,g,h$. Finally, let $f,g,h$ be arbitrary random variables with disjoint support. It is easy to check that $(f+g)^+=f^++g^+$,
$(f+g)^-=f^-+g^-$, $(f+g+h)^+=f^++g^++h^+$, etc. Then \eqref{eq32} becomes
\begin{align*}
(f^+&+g^++h^+)^\wedge -(f^-+g^-+h^-)^\wedge\\
  &=(f^++g^+)^\wedge -(f^-+g^-)^\wedge +(f^++h^+)^\wedge -(f^-+h^-)^\wedge +(g^++h^+)^\wedge\\
  &\quad -(g^-+h^-)^\wedge -f^{+\wedge}+f^{-\wedge}-g^{+\wedge}+g^{-\wedge}-h^{+\wedge}+h^{-\wedge}
\end{align*}
But this follows from our previous work because $f^+,g^+,h^+$ and $f^-,g^-,h^-$ are nonnegative and have disjoint support so \eqref{eq32} holds for $f^+,g^+,h^+$ and also for $f^-,g^-,h^-$.
\end{proof}

The next result shows that $f\mapsto\fhat$ preserves positivity.

\begin{thm}       
\label{thm32}
If $f\ge 0$ is a random variable, then $\fhat$ is a positive operator.
\end{thm}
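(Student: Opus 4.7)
My plan is to exhibit $\fhat$ as a weak integral of the manifestly positive $q$-measure operators $\mu(E_t)$, where $E_t=\brac{x\colon f(x)>t}$, by means of the layer-cake identity
\begin{equation*}
\min(a,b)=\int_0^\infty\chi_{(t,\infty)}(a)\,\chi_{(t,\infty)}(b)\,dt\qquad(a,b\ge 0).
\end{equation*}
Substituting $a=f(x)$, $b=f(y)$ into \eqref{eq31} would convert the definition of $\fhat$ into
\begin{equation*}
(\fhat g)(y)=\int_0^\infty\chi_{E_t}(y)\int\chi_{E_t}(x)g(x)\,d\nu(x)\,dt=\int_0^\infty\sqbrac{\mu(E_t)g}(y)\,dt,
\end{equation*}
displaying $\fhat$ as a nonnegatively weighted integral of the positive rank-one operators $\mu(E_t)=\ket{\chi_{E_t}}\bra{\chi_{E_t}}$.

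From this representation I would read off positivity by testing against an arbitrary $g\in H$: applying Fubini once more to swap the $t$-integration with the $(x,y)$-integration in $\elbows{g,\fhat g}=\int\bar{g}(y)(\fhat g)(y)\,d\nu(y)$, the $x$- and $y$-pieces decouple and each yields a factor $\elbows{\chi_{E_t},g}$, producing
\begin{equation*}
\elbows{g,\fhat g}=\int_0^\infty\ab{\elbows{\chi_{E_t},g}}^2 dt\ge 0.
\end{equation*}
Since $\fhat$ has already been shown self-adjoint, this suffices.

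The only step requiring real care is the Fubini interchange; it is justified because $\nu$ is a probability measure (so $H\subseteq L_1(\nu)$), $f\ge 0$ lies in $H\subseteq L_1(\nu)$, and the triple integrand is dominated in absolute value by $\ab{g(y)}\,\ab{g(x)}\,f(x)$, which is integrable with respect to $\nu\times\nu\times dt$ by Cauchy--Schwarz. If one prefers to avoid operator-valued integrals altogether, an equivalent route is to prove the claim first for simple $f=\sum_{k=1}^n\alpha_k\chi_{A_k}$ with disjoint $A_k$ and $0=\alpha_0\le\alpha_1\le\cdots\le\alpha_n$: setting $B_k=A_k\cup\cdots\cup A_n$, a direct computation gives $\min\sqbrac{f(x),f(y)}=\sum_{k=1}^n(\alpha_k-\alpha_{k-1})\chi_{B_k}(x)\chi_{B_k}(y)$, whence $\fhat=\sum_{k=1}^n(\alpha_k-\alpha_{k-1})\mu(B_k)$ is a nonnegative combination of positive operators. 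The general case then follows by monotone approximation of $f$ from below by such simple functions together with Theorem~\ref{thm31}(c), since positivity is preserved under operator-norm limits.
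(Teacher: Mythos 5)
Your proposal is correct, but your primary argument takes a genuinely different route from the paper. The paper works at the level of the quadratic form for a simple function $f=\sum\alpha_i\chi_{A_i}$ with $\alpha_1<\cdots<\alpha_n$: it expands $\elbows{\fhat u,u}=\sum_{i,j}\min(\alpha_i,\alpha_j)\int_{A_i}u\,d\nu\int_{A_j}\ubar\,d\nu$ and performs an Abel summation to rewrite it as $\alpha_1\ab{\int_{B_1}u\,d\nu}^2+\sum_{k\ge 2}(\alpha_k-\alpha_{k-1})\ab{\int_{B_k}u\,d\nu}^2\ge 0$ with $B_k=\cup_{i\ge k}A_i$, then passes to general $f\ge 0$ by monotone approximation and Theorem~\ref{thm31}(c). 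Your fallback route (the identity $\min\sqbrac{f(x),f(y)}=\sum_k(\alpha_k-\alpha_{k-1})\chi_{B_k}(x)\chi_{B_k}(y)$, hence $\fhat=\sum_k(\alpha_k-\alpha_{k-1})\mu(B_k)$) is exactly this computation repackaged at the operator level, and is arguably cleaner since positivity is then immediate from positivity of the $\mu(B_k)$. Your primary route via the layer-cake identity $\min(a,b)=\int_0^\infty\chi_{(t,\infty)}(a)\chi_{(t,\infty)}(b)\,dt$ is different and more efficient: it yields $\elbows{g,\fhat g}=\int_0^\infty\ab{\elbows{\chi_{E_t},g}}^2dt\ge 0$ in one stroke, with no simple-function approximation and no appeal to the continuity result; moreover it essentially contains the tail-sum formula of Theorem~\ref{thm34} as a byproduct, since for $\rho=\ket{u}\bra{u}$ the right-hand side is $\int_0^\infty\mu_\rho(E_t)\,dt$. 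One small imprecision: the dominating function for Fubini should be $\ab{g(y)}\,\ab{g(x)}\,\chi_{\brac{t<f(x)}}$ rather than $\ab{g(y)}\,\ab{g(x)}\,f(x)$ (the latter, being constant in $t$, is not integrable over $(0,\infty)$ in $t$); after integrating out $t$ the former gives $\ab{g(y)}\,\ab{g(x)}\,f(x)$, which is $\nu\times\nu$-integrable by Cauchy--Schwarz as you say, so the interchange is justified and the slip is cosmetic.
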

\begin{proof} Suppose $f\ge 0$ is a simple function with $f=\sum _{i=1}^n\alpha _i,\chi _{A_i}$ where
$0\le\alpha _1<\alpha _2<\cdots <\alpha _n$, $A_i\cap A_j=\emptyset$, $i\ne j$ and $\cup A_i=\Omega$. If $u\in H$ and $B_j=\bigcup\limits _{i=j}^nA_i$, then by \eqref{eq33} we have
\begin{align}         
\label{eq34}
\elbows{\fhat u,u}&=\sum _{i,j=1}^n\min (\alpha _i,\alpha _j)\int _{A_i}ud\nu\int _{A_j}\ubar d\nu\notag\\
  &=\alpha _1\sqbrac{\int _{A_1}ud\nu\int _{B_1}\ubar d\nu +\int _{B_2}ud\nu\int _{A_1}\ubar}\notag\\
  &\quad +2\alpha _2\rmre\int _{A_2}ud\nu\int _{B_2}\ubar d\nu
  +2\alpha _3\rmre\int _{A_3}ud\nu\int _{B_3}\ubar d\nu\notag\\
  &\quad +\cdots +2\alpha _n\int _{A_n}ud\nu\int _{B_n}\ubar d\nu\notag\\
  &=\alpha _1\sqbrac{\ab{\int _{B_1}ud\nu}^2-\ab{\int _{B_2}ud\nu}^2}
  +\alpha _2\sqbrac{\ab{\int _{B_2}ud\nu}^2-\ab{\int _{B_3}ud\nu}^2}\notag\\
  &\quad +\cdots +\alpha _n\ab{\int _{B_n}ud\nu}^2\notag\\
  &=\alpha _1\ab{\int _{B_1}ud\nu}^2+(\alpha _2-\alpha _1)\ab{\int _{B_2}ud\nu}^2+(\alpha _3-\alpha _2)
  \ab{\int _{B_3}ud\nu}^2\notag\\
  &\quad +\cdots +(\alpha _n-\alpha _{n-1})\ab{\int _{B_n}ud\nu}^2\ge 0
\end{align}
Hence, $\fhat$ is a positive operator. If $f\ge 0$ is an arbitrary random variable, then there exists an increasing sequence of simple functions $f_i\ge 0$ converging pointwise to $f$. Since $\fhat _i$ are positive, it follows from Theorem~\ref{thm31}(c) that $\fhat$ is positive.
\end{proof}

A random variable $f$ satisfying $0\le f\le 1$ is called a \textit{fuzzy} (or \textit{unsharp}) \textit{event} while functions
$\chi _A$, $A\in\ascript$, are called \textit{sharp events} \cite{dp00}. If $\nu (A)\ne 0$, denote the projection onto
$\rmspan (\chi _A)$ by $P(A)$. Theorem~\ref{thm31}(a) shows that $\chihat _A=\nu (A)P(A)$. Thus, quantization takes sharp events to constants times projections. An operator $T$ on $H$ satisfying $0\le T\le I$ is called an
\textit{effect} \cite{dp00}. Since $\doubleab{\fhat}\le\doubleab{f}$, it follows from Theorem~\ref{thm32} that quantization takes fuzzy events into effects.

Let $\rho$ be a density operator on $H$ and let $\mu _\rho (A)=\rmtr\sqbrac{\rho\mu (A)}$ be the corresponding
$q$-measure. If $f$ is a random variable, we define the $q$-\textit{integral} (or $q$-\textit{expectation}) of $f$ with respect to $\mu _\rho$ as
\begin{equation*}
\int fd\mu _\rho =\rmtr (\rho\fhat )
\end{equation*}
As usual, for $A\in\ascript$ we define
\begin{equation*}
\int _Afd\mu _\rho =\int\chi _Afd\mu _\rho
\end{equation*}
The next result follows from Theorems~\ref{thm31} and \ref{thm32}.

\begin{cor}       
\label{cor33}
{\rm (a)}\enspace For all $A\in\ascript$, we have $\int\chi _Ad\mu _\rho =\mu _\rho (A)$.
{\rm (b)}\enspace For all $\alpha\in\real$, we have $\int\alpha fd\mu _\rho=\alpha\int fd\mu _\rho$.
{\rm (c)}\enspace If $f_i\ge 0$, is an increasing sequence of random variables converging to a random variable $f$, then $\lim\int f_id\mu _\rho =\int fd\mu _\rho$.
{\rm (d)}\enspace If $f\ge 0$, then $\int fd\mu _\rho\ge 0$.
{\rm (e)}\enspace If $f,g,h$ are random variables with disjoint support, then
\begin{align*}
\int (f+g+h)d\mu _\rho&=\int (f+g)d\mu _\rho +\int (f+h)d\mu _\rho +\int (g+h)d\mu _\rho\\
  &\quad -\int fd\mu _\rho -\int gd\mu _\rho -\int hd\mu _\rho
\end{align*}
{\rm (f)}\enspace If $A,B,C\in\ascript$ are mutually disjoint, then 
\begin{align*}
\int _{A\cup B\cup C}fd\nu&=\int _{A\cup B}fd\mu _\rho +\int _{A\cup C}fd\mu _\rho +\int _{B\cup C}fd\mu _\rho\\
  &\quad -\int _Afd\mu _\rho -\int _Bfd\mu _\rho -\int _Cfd\mu _\rho
\end{align*}
\end{cor}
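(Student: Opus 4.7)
The overall plan is to reduce each clause of the corollary to an operator-level statement already established in Theorems~\ref{thm31} and~\ref{thm32}, and then push it through the trace against $\rho$. The only analytic input beyond the definition $\int f\,d\mu_\rho = \rmtr(\rho\fhat\,)$ that I will need is the elementary bound $|\rmtr(\rho T)| \le \doubleab{\rho}_1\,\doubleab{T} = \doubleab{T}$, valid because $\rho$ is a density operator and thus has trace norm $1$.

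Parts (a), (b), (d), (e) are one-line consequences of this approach. For (a) I substitute $\chihat_A = \mu(A)$ from Theorem~\ref{thm31}(a) into $\rmtr(\rho\chihat_A)$, which is precisely the definition of $\mu_\rho(A)$. For (b) I apply Theorem~\ref{thm31}(b) inside the trace and pull the scalar $\alpha$ out. For (d), Theorem~\ref{thm32} gives $\fhat\ge 0$, and $\rho\ge 0$, so the trace of their product is nonnegative. For (e) I take $\rmtr(\rho\,\cdot\,)$ of both sides of \eqref{eq32} from Theorem~\ref{thm31}(d) and use linearity of the trace; all six ``$(\cdot)^\wedge$'' terms become $q$-integrals by definition.

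The only clause with genuine analytic content is (c), and here the main (though still routine) point is that operator-norm convergence of $\fhat_i$ transfers to numerical convergence of $\rmtr(\rho\fhat_i\,)$. Concretely, Theorem~\ref{thm31}(c) gives $\doubleab{\fhat_i-\fhat}\to 0$, so
\begin{equation*}
\ab{\textstyle\int f_i\,d\mu_\rho - \int f\,d\mu_\rho} = \ab{\rmtr\!\sqbrac{\rho(\fhat_i-\fhat\,)}} \le \doubleab{\fhat_i-\fhat}\to 0.
\end{equation*}
(Alternatively one could cite the continuity of $T\mapsto\rmtr(\rho T)$ on $\bscript(H)$, which is immediate since $\rho$ is trace class.)

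For (f), I reduce to (e) by the standard trick of replacing $f$ with its restrictions: set $f_A=\chi_Af$, $f_B=\chi_Bf$, $f_C=\chi_Cf$. Since $A,B,C$ are mutually disjoint, $f_A,f_B,f_C$ have mutually disjoint support, so (e) applies to them. Reading off the definition $\int_Ef\,d\mu_\rho = \int\chi_Ef\,d\mu_\rho$ and using the identities $f_A+f_B=\chi_{A\cup B}f$, etc., as well as $f_A+f_B+f_C=\chi_{A\cup B\cup C}f$, converts the resulting identity into exactly the formula in (f). I anticipate no obstacle; the only thing to be careful about is the bookkeeping of which sums of characteristic functions give which unions, which works cleanly precisely because of disjointness.
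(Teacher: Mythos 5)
Your proposal is correct and is exactly the argument the paper intends: the paper offers no written proof beyond the remark that the corollary ``follows from Theorems~\ref{thm31} and \ref{thm32},'' and your clause-by-clause reduction --- applying $\rmtr(\rho\,\cdot\,)$ to the operator identities, using $\ab{\rmtr (\rho T)}\le\doubleab{T}$ for (c), positivity of $\rmtr (\rho\fhat\,)$ for (d), and the restriction trick $f\chi _A,f\chi _B,f\chi _C$ for (f) --- is the standard way to fill that in. The only point worth flagging is that the left-hand side of (f) as printed reads $\int _{A\cup B\cup C}fd\nu$, which is evidently a typo for $\int _{A\cup B\cup C}fd\mu _\rho$; your argument proves the corrected statement.
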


The following result is called the \textit{tail-sum formula}

\begin{thm}       
\label{thm34}
If $f\ge 0$ is a random variable, then
\begin{equation*}
\int fd\mu _\rho =\int _0^\infty\mu _\rho\brac{x\colon f(x)>\lambda}d\lambda
\end{equation*}
where $d\lambda$ denotes Lebesgue measure on $\real$.
\end{thm}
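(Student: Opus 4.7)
My plan is to represent the kernel $\min[f(x),f(y)]$ of the quantization operator as an integral over the super-level sets of $f$, reducing $\fhat$ to a weak integral of the $q$-measure operators $\mu(A_\lambda)$ for $A_\lambda=\brac{x\colon f(x)>\lambda}$; taking the trace against $\rho$ then produces the tail-sum formula.

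The key ingredient is the layer-cake identity: for any $a,b\ge 0$,
\[
\min[a,b]=\int _0^\infty\chi _{(\lambda,\infty)}(a)\chi _{(\lambda,\infty)}(b)\,d\lambda,
\]
since the integrand equals $1$ precisely when $\lambda<\min(a,b)$. Applied pointwise to $f$, this yields $\min\sqbrac{f(x),f(y)}=\int _0^\infty\chi _{A_\lambda}(x)\chi _{A_\lambda}(y)\,d\lambda$.

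Substituting this identity into \eqref{eq31} and using Fubini's theorem (justified because the nonnegative integrand, after integrating $\lambda$ out first, becomes $\int f\ab{g}\,d\nu\le\doubleab{f}\doubleab{g}<\infty$) allows me to pull the $\lambda$-integral outside, obtaining
\[
(\fhat g)(y)=\int _0^\infty\sqbrac{\mu(A_\lambda)g}(y)\,d\lambda
\]
and hence $\elbows{g,\fhat g}=\int _0^\infty\elbows{g,\mu(A_\lambda)g}\,d\lambda$ for every $g\in H$. Now diagonalize $\rho=\sum _mp_m\ket{\psi _m}\bra{\psi _m}$ and apply the quadratic-form identity with $g=\psi _m$. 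All summands $p_m$ and all integrands $\elbows{\psi _m,\mu(A_\lambda)\psi _m}$ are nonnegative, so Tonelli's theorem permits interchanging $\sum _m$ with $\int _0^\infty$, producing
\[
\rmtr(\rho\fhat\,)=\int _0^\infty\rmtr\sqbrac{\rho\mu(A_\lambda)}\,d\lambda=\int _0^\infty\mu _\rho(A_\lambda)\,d\lambda.
\]

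The main technical obstacle is licensing the two exchanges of integration, but both reduce to the integrability bounds above together with $\nu(\Omega)=1$. An alternative route would be to prove the formula first for nonnegative simple $f$ by extracting the decomposition $\fhat=\sum _j(\alpha _j-\alpha _{j-1})\mu(B _j)$ from \eqref{eq34} and matching it against the step function $\lambda\mapsto\mu _\rho\brac{x\colon f(x)>\lambda}$, then extending via Theorem~\ref{thm31}(c); however, this variant requires a dominated-convergence argument based on $\mu _\rho(A)\le\nu(A)$ to deal with the non-monotonicity of $\mu _\rho$ in $A$, whereas the direct layer-cake approach sidesteps that difficulty entirely.
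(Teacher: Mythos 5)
Your proof is correct, but it takes a genuinely different route from the paper's. The paper fixes a pure state $\rho=\ket{u}\bra{u}$ and a nonnegative simple function $f=\sum\alpha_i\chi_{A_i}$, reuses the telescoping computation \eqref{eq34} of $\elbows{\fhat u,u}$ in terms of the tail sets $B_j$, separately evaluates $\int_0^\infty\mu_\rho\brac{f>\lambda}d\lambda$ as the matching telescoping sum \eqref{eq35}, and then extends to general $f$ by monotone approximation (Corollary~\ref{cor33}(c)) and to mixed states by convexity. Your layer-cake identity $\min[a,b]=\int_0^\infty\chi_{(\lambda,\infty)}(a)\chi_{(\lambda,\infty)}(b)\,d\lambda$ instead yields the operator-level statement $\fhat=\int_0^\infty\mu(A_\lambda)\,d\lambda$ (weakly), from which the theorem follows in one stroke by Tonelli; the Fubini justifications you give are adequate (the joint measurability of $(x,\lambda)\mapsto\chi_{A_\lambda}(x)$ is immediate from measurability of $f$, and your $L_1$ bounds are right). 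Your approach buys three things: it avoids simple-function approximation entirely; it produces a stronger structural fact about $\fhat$ as a superposition of the rank-one operators $\mu(A_\lambda)$; and it sidesteps a step the paper itself glosses over, namely that passing to the limit inside $\int_0^\infty\mu_\rho\brac{f_i>\lambda}d\lambda$ requires dominated convergence with dominating function $\nu\brac{f>\lambda}$ (via $\mu_\rho(A)\le\nu(A)$), since $\mu_\rho$ is not monotone --- a point you correctly flag in your closing remarks. What the paper's route buys in exchange is elementarity (no product-measure machinery) and economy, since the computation \eqref{eq34} is already in hand from the proof of positivity of $\fhat$.
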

\begin{proof}
Let $f\ge 0$ be a simple function with $f=\sum\alpha _i\chi _{A_i}$, $0\le\alpha _1<\alpha _2<\cdots <\alpha _n$. Let
$u\in H$ with $\doubleab{u}=1$ and let $\rho$ be the density operator given by $\rho =\ket{u}\bra{u}$. Of course,
$\rho$ is a pure state. Then
\begin{equation*}
\mu _\rho (A)=\ab{\elbows{u,\chi _A}}^2=\ab{\int _Aud\nu}^2
\end{equation*}
Using the notation of Theorem~\ref{thm32}, we have
\begin{align*}
\mu _\rho\brac{x\colon f(x)>\lambda}&=\mu _\rho (B_{i+1})\hspace{2pc}\hbox{for }\alpha _i<\lambda\le\alpha _{i+1}\\
\mu _\rho\brac{x\colon f(x)>\lambda}&=\mu _\rho (B_1)\hspace{2.5pc}\hbox{for }0\le\lambda\le\alpha _1\\
\mu _\rho\brac{x\colon f(x)>\lambda}&=0\hspace{5pc}\hbox{for }\alpha _n<\lambda
\end{align*}
Moreover, we have
\begin{equation*}
\mu _\rho (B_{i+1})=\ab{\elbows{u,\sum _{j=i+1}^n\chi _{A_j}}}^2=\ab{\sum _{j=i+1}^n\int _{A_j}ud\nu}^2
\end{equation*}
We conclude that
\begin{align}         
\label{eq35}
\int _0^\infty\mu _\rho&\brac{x\colon f(x)>\lambda}d\lambda\notag\\
  &=\int _0^{\alpha _1}\mu _\rho\brac{x\colon f(x)>\lambda}d\lambda
  +\int _{\alpha _1}^{\alpha _2}\mu _\rho\brac{x\colon f(x)>\lambda}d\lambda\notag\\
  &\quad +\cdots +\int _{\alpha _{n-1}}^{\alpha _n}\mu _\rho\brac{x\colon f(x)>\lambda}d\lambda\notag\\
  &=\alpha _1\mu _\rho (B_1)+(\alpha _2-\alpha _1)\mu _\rho (B_2)+(\alpha _3-\alpha _2)\mu _\rho (B_3)\notag\\
  &\quad +\cdots +(\alpha _n-\alpha _{n-1})\mu _\rho (B_n)\notag\\
  &=\alpha\ab{\int _{B_1}ud\nu}^2+(\alpha _2-\alpha _1)\ab{\int _{B_2}ud\nu}^2\notag\\
  &\quad +\cdots +(\alpha _n-\alpha _{n-1})\ab{\int _{B_n}ud\nu}^2
\end{align}

Comparing \eqref{eq34} and \eqref{eq35} gives
\begin{equation*}
\int fd\mu _\rho =\elbows{\fhat u,u}=\int _0^\infty\mu _\rho\brac{x\colon f(x)>\lambda}d\lambda
\end{equation*}
Applying Theorem~\ref{thm23}(b) and Corollary~\ref{cor33}(c) we conclude that the result holds because a mixed state is a convex combination of pure states.
\end{proof}

Applying Theorem~\ref{thm34} we have for an arbitrary random variable $f$ that
\begin{equation*}
\int fd\mu _\rho =\int _0^\infty\mu _\rho\brac{x\colon f(x)>\lambda}d\lambda
  -\int _0^\infty\mu _\rho\brac{x\colon f(x)<-\lambda}d\lambda
\end{equation*}
This result shows that the present definition of a $q$-integral reduces to the definition studied previously
\cite{gud091, gud092}.

\section{Finite Unitary Systems} 
This section discusses a physical example that illustrates the theory of Section~2. A \textit{finite unitary system} is a collection of unitary operators $U(s,r)$, $r\le s\in\positive$, on $\complex ^m$ such that $U(r,r)=I$ and
$U(t,r)=U(t,s)U(s,r)$ for all $r\le s\le t\in\positive$. These operators describe the evolution of a finite-dimensional quantum system in discrete steps from time $r$ to time $s$. If $\brac{U(s,r)\colon r\le s}$ is a finite unitary system, then we have the unitary operators $U(n+1,n)$, $n\in\positive$, such that
\begin{equation}         
\label{eq41}
U(s,r)=U(s,s-1)U(s-1,s-2)\cdots U(r+1,r)
\end{equation}
Conversely, if $U(n+1,n)$, $n\in\positive$, are unitary operators on $\complex ^m$, then defining $U(r,r)=I$ and for
$r<s$ defining $U(s,r)$ by \eqref{eq41} we have the finite unitary system $\brac{U(s,r)\colon r\le s}$.

In the sequel, $\brac{U(s,r)\colon r\le s}$ will be a fixed finite unitary system on $\complex ^m$. Suppose the evolution of a particle is governed by $U(s,r)$ and the particle's position is at one of the points $0,1,\ldots ,m-1$. We call the elements of $S=\brac{0,1,\ldots ,m-1}$ \textit{sites} and the infinite strings
$\gamma =\gamma _1\gamma _1\gamma _2\cdots$, $\gamma _i\in S$ are called \textit{paths}. The paths represent particle trajectories and the \textit{path} or \textit{sample space} is
\begin{equation*}
\Omega =\brac{\gamma\colon\gamma\hbox{ a path}}
\end{equation*}
The finite strings $\gamma =\gamma _0\gamma _1\cdots\gamma _n$ are $n$-\textit{paths} and
\begin{equation*}
\Omega _n=\brac{\gamma\colon\gamma\hbox{ an $n$-path}}
\end{equation*}
is the $n$-\textit{path} or $n$-\textit{sample space}. The $n$-paths represent time-$n$ truncated particle trajectories. Notice that the cardinality $\ab{\Omega _n}=m^{n+1}$. The power set $\ascript _n=2^{\Omega _n}$ is the set of
$n$-\textit{events}. Letting $\nu _n$ be the uniform distribution $\nu _n(\gamma )=1/m^{n+1}$,
$\gamma\in\Omega _n$, $(\Omega _n,\ascript _n,\nu _n)$ becomes a probability space.

We call $\complex ^m$ the \textit{position Hilbert space} Let $e_0,\ldots ,e_{m-1}$ be the standard basis for
$\complex ^m$ and let $P(i)=\ket{ e_i}\bra{e_i}$, $i=0,1,\ldots ,m-1$ be the corresponding projection operators. For
$\gamma\in\Omega _n$ the operator $C_n(\gamma )$ on $\complex ^m$ that describes this trajectory is
\begin{equation*}
C_n(\gamma )=P(\gamma _n)U(n,n-1)P(\gamma _{n-1})U(n-1,n-2)\cdots P(\gamma _1)U(1,0)P(\gamma _0)
\end{equation*}
Letting
\begin{equation}         
\label{eq42}
b(\gamma )=\elbows{e_{\gamma _n},U(n,n-1)e_{\gamma _{n-1}}}\cdots
  \elbows{e_{\gamma _2},U(2,1)e_{\gamma _1}}\elbows{e_{\gamma _1},U(1,0)e_{\gamma _0}}
\end{equation}
we have that
\begin{equation}         
\label{eq43}
C_n(\gamma )=b(\gamma )\ket{e_{\gamma _n}}\bra{e_{\gamma _0}}
\end{equation}
If $\psi\in\complex ^m$ is a unit vector, the complex number $a_\psi (\gamma )=b(\gamma )\psi (\gamma _0)$ is the \textit{amplitude} of $\gamma$ with initial distribution (state) $\psi$. It is easy to show that
\begin{equation}         
\label{eq44}
\sum _{\gamma\in\Omega _n}\ab{a_\psi (\gamma )}^2=1
\end{equation}
Moreover, for all $\gamma ,\gamma '\in\Omega _n$ we have
\begin{equation}         
\label{eq45}
C_n(\gamma ')^*C_n(\gamma )
  =\overline{b(\gamma ')}b(\gamma )\ket{e_{\gamma '_0}}\bra{e_{\gamma _0}}\delta _{\gamma _n,\gamma '_n}
\end{equation}
The operator $C_n(\gamma ')^*C_n(\gamma )$ describes the interference between the paths $\gamma$ and
$\gamma '$.

For $A\in\ascript _n$, the \textit{class operator} $C_n(A)$ is defined by
\begin{equation*}
C_n(A)=\sum _{\gamma\in A}C_n(\gamma )
\end{equation*}
It is clear that $A\mapsto C_n(A)$ is an operator-valued measure on the algebra $\ascript _n$ satisfying
$C_n(\Omega _n)=U(n,0)$. The \textit{decoherence functional}
$\Delta _n\colon\ascript _n\times\ascript _n\to\complex$ is given by
$\Delta _n(A,B)=\elbows{C_n^*(A)C_n(B)\psi ,\psi}$ where $\psi\in\complex ^m$ is the initial state. It is clear that
$A\mapsto\Delta _n(A,B)$ is a complex measure on $\ascript _n$ for any $B\in\ascript _n$,
$\Delta _n(\Omega _n,\Omega _n)=1$ and it is well-known that if $A_1,\ldots ,A_r\in\ascript$, then
$\Delta _n(A_i,A_j)$ is an $r\times r$ positive semidefinite matrix \cite{gt09, sor941, sor942, sor071}. Corresponding to an initial state
$\psi\in\complex ^m$, $\doubleab{\psi}=1$, the $n$-decoherence matrix is
\begin{equation*}
\Delta _n(\gamma ,\gamma ')=\Delta _n\paren{\brac{\gamma},\brac{\gamma '}}
\end{equation*}
Applying \eqref{eq45} we have
\begin{align}         
\label{eq46}
\Delta _n(\gamma ,\gamma ')&=\elbows{C_n(\gamma ')^*C_n(\gamma )\psi ,\psi}
  =\overline{b(\gamma ')}b(\gamma )\overline{\psi (\gamma '_0)}\psi (\gamma _0)\delta _{\gamma _n,\gamma '_n}
  \notag\\
  &=\overline{a_\psi (\gamma')}a_\psi (\gamma )\delta _{\gamma _n,\gamma '_n}
\end{align}

Define the $n$-\textit{path Hilbert space} $H_n=(\complex ^m)^{\otimes (n+1)}$. For $\gamma\in\Omega _n$ we associate the unit vector in $H_n$ given by
\begin{equation*}
e_{\gamma _n}\otimes e_{\gamma _{n-1}}\otimes\cdots\otimes e_{\gamma _0}
\end{equation*}
We can think of $H_n$ as $\brac{\phi\colon\Omega _n\to\complex}$ with the usual inner product and
$\Delta _n(\gamma ,\gamma ')$ corresponds to the operator
\begin{equation*}
(\Delta _n\phi )(\gamma )=\sum _{\gamma '}\Delta _n(\gamma ,\gamma ')\psi (\gamma ')
\end{equation*}
Since $\Delta _n(\gamma ,\gamma ')$ is a positive semidefinite matrix, $\Delta _n$ is a positive operator on $H_n$ and by \eqref{eq44} and \eqref{eq46} we have
\begin{equation*}
\rmtr (\Delta _n)=\sum _\gamma\Delta _n(\gamma ,\gamma )=\sum _\gamma\ab{a_\psi (\gamma )}^2=1
\end{equation*}
We conclude that $\Delta _n$ is a state on $H_n$.

\begin{lem}       
\label{lem41}
The decoherence functional satisfies
\begin{equation*}
\Delta _n(A,B)=\rmtr\paren{\ket{\chi _A}\bra{\chi _B}\Delta _n}
\end{equation*}
for all $A,B\in\ascript$.
\end{lem}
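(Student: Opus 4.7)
The plan is to reduce both sides to a pointwise (singleton) comparison via bilinearity, and then to read off matching matrix entries of the operator $\Delta_n$.

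First I would observe that every event in the finite algebra $\ascript_n$ is a disjoint union of singletons and that both sides of the asserted identity are additive in each argument. For the left side, $A\mapsto C_n(A)$ is additive by definition, so $A\mapsto C_n(A)^*$ is as well, and $\elbows{C_n^*(A)C_n(B)\psi,\psi}$ is therefore additive in both $A$ and $B$. For the right side, $A\mapsto\chi_A$ is additive, making $A\mapsto\ket{\chi_A}$ and $B\mapsto\bra{\chi_B}$ additive, so the trace $\rmtr\paren{\ket{\chi_A}\bra{\chi_B}\Delta_n}$ is additive in each argument. Thus it suffices to verify the identity on singletons $A=\brac{\gamma}$, $B=\brac{\gamma'}$.

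Next I would handle the right side. Cyclicity of the trace together with the elementary identity $\rmtr(\ket{u}\bra{v}X)=\elbows{v,Xu}$ yields
\begin{equation*}
\rmtr\paren{\ket{\chi_A}\bra{\chi_B}\Delta_n}=\elbows{\chi_B,\Delta_n\chi_A}.
\end{equation*}
For singletons, $\chi_{\brac{\gamma}}$ is, under the identification of $H_n$ with complex-valued functions on $\Omega_n$, simply the basis vector supported at $\gamma$. The pairing $\elbows{\chi_{\brac{\gamma'}},\Delta_n\chi_{\brac{\gamma}}}$ is then exactly the $(\gamma',\gamma)$ matrix entry of the operator $\Delta_n$, which is $\Delta_n(\gamma',\gamma)$ in the notation of \eqref{eq46}.

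For the left side on singletons, direct expansion gives $\Delta_n(\brac{\gamma},\brac{\gamma'})=\elbows{C_n(\gamma)^*C_n(\gamma')\psi,\psi}$, and \eqref{eq46}, read with arguments interchanged to match the ordering forced by the adjoint appearing first, identifies this quantity as $\Delta_n(\gamma',\gamma)$. The two singleton expressions therefore coincide, and summing over the singleton decompositions of $A$ and $B$ delivers the lemma. The only genuinely delicate point in the argument is the bookkeeping of the index swap between $\Delta_n(\cdot,\cdot)$ as a set function and $\Delta_n(\cdot,\cdot)$ as a matrix entry on $H_n$; this swap is forced by the adjoint in $C_n^*$ combined with the conjugate-linearity of the inner product in its first slot, and aligning these conventions is essentially the only substantive content of the proof beyond bilinearity.
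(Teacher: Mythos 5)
Your proof is correct and follows essentially the same route as the paper's: both arguments come down to expanding over the orthonormal basis of singleton indicators and identifying the matrix entries of the operator $\Delta_n$ with the singleton values of the decoherence functional, with your version merely making the biadditivity reduction explicit up front where the paper's direct trace computation produces the double sum over $\gamma\in A$, $\gamma'\in B$ automatically. Your attention to the argument-order bookkeeping between the set function and the matrix entries is in fact more careful than the paper, which glosses over that conjugation/swap by appealing implicitly to the Hermiticity of $\Delta_n(\gamma,\gamma')$.
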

\begin{proof}
By the definition of the trace we have
\begin{align*}
\rmtr\paren{\ket{\chi _A}\bra{\chi _B}\Delta _n}
  &=\sum _\gamma\elbows{\ket{\chi _A}\bra{\chi _B}\Delta _n\gamma ,\gamma}\\
  &=\sum _\gamma\elbows{\Delta _n\gamma ,\chi _B}\elbows{\chi _A,\gamma}
  =\sum _{\gamma\in A}\elbows{\Delta _n\gamma ,\chi _B}\\
  &=\sum\brac{\elbows{\Delta _n\gamma ,\gamma '}\colon\gamma\in A,\gamma '\in B}\\
  &=\sum\brac{\Delta _n(\gamma ,\gamma ')\colon\gamma\in A,\gamma '\in B}\\
  &=\Delta _n(A,B)\qedhere
\end{align*}
\end{proof}

Lemma~\ref{lem41} shows that the decoherence functional as it is usually defined coincides with the decoherence functional of Section~2. Moreover, the usual $q$-measure $\mu _{\Delta _n}(A)=\Delta _n(A,A)$ coincides with the $q$-measure of Section~2. We have only discussed the time-$n$ truncated path space $\Omega _n$. The infinite time path space $\Omega$ is of primary interest, but its study is blocked by mathematical difficulties. It is hoped that the present structure will help to make progress in overcoming these difficulties.

\end{document}